\newtheorem{theorem}{Theorem}[section]
\newtheorem{lemma}[theorem]{Lemma}
\newtheorem{proposition}[theorem]{Proposition}
\theoremstyle{definition}
\theoremstyle{remark}
\newtheorem{remark}[theorem]{Remark}
\numberwithin{equation}{section}
\begin{document}

\title[A New Efficient Asymmetric Cryptosystem Based on the Square Root Problem] {A New Efficient Asymmetric Cryptosystem Based on the Square Root Problem}

%Information for first author
\author{M.R.K. Ariffin}
%    Address of record for the research reported here
\address{Al-Kindi Cryptography Research Laboratory, Institute for Mathematical Research,
Universiti Putra Malaysia, 43400 UPM,Serdang, Selangor, MALAYSIA}
%    Current address
\curraddr{Department of Mathematics, Faculty of Science, Universiti
Putra Malaysia, 43400 UPM, Serdang, Selangor, MALAYSIA}
\email{rezal@putra.upm.edu.my}
%    \thanks will become a 1st page footnote.
\thanks{The research was supported by the Fundamental Research Grant Scheme
$\#5523934$ and Prototype Research Grant Scheme $\#5528100$ Ministry
of Higher Education, MALAYSIA.}

%    Information for second author
\author{M.A.Asbullah}
\address{Al-Kindi Cryptography Research Laboratory, Institute for Mathematical Research,
Universiti Putra Malaysia, 43400 UPM,Serdang, Selangor, MALAYSIA}

\curraddr{Centre of Foundation Studies for Agricultural Science,
Universiti Putra Malaysia, 43400 UPM, Serdang, Selangor, MALAYSIA}
\email{ma$\_$asyraf@putra.upm.edu.my}
%\thanks{Ministry of Higher Education, MALAYSIA}

\author{N.A. Abu}
\address{Al-Kindi Cryptography Research Laboratory, Institute for Mathematical Research,
Universiti Putra Malaysia, 43400 UPM,Serdang, Selangor, MALAYSIA}
\curraddr{Department of Computer Systems and Communication,
Faculty of Information and Communication Technology, Universiti
Teknikal Malaysia Melaka, 76109 Durian Tunggal, Melaka, MALAYSIA}
\email{nura@utem.edu.my}
%\thanks{Support information for the second author.}

%\author{Z. Mahad}
%\address{Al-Kindi Cryptography Research Laboratory, Laboratory for Theoretical Studies,
%Institute for Mathematical Research, Universiti Putra Malaysia, 43400 UPM,
%Serdang, Selangor, MALAYSIA}
%\email{putra_ikim@gmail.com}

%\author{M.A. Daud}
%\address{Al-Kindi Cryptography Research Laboratory, Laboratory for Theoretical Studies,
%Institute for Mathematical Research, Universiti Putra Malaysia, 43400 UPM,
%Serdang, Selangor, MALAYSIA}
%\email{putra_ikim@gmail.com}

%    General info
\subjclass[2010]{94A60, 68P25, 11D45}

%\date{January 1, 2001 and, in revised form, June 22, 2001.}

%\dedicatory{This paper is dedicated to our advisors.}

\keywords{Asymmetric cryptography; square root modulo problem;
cryptanalysis.}

\begin{abstract}
The square root modulo problem is a known primitive in designing an
asymmetric cryptosystem. It was first attempted by Rabin. Decryption
failure of the Rabin cryptosystem caused by the 4-to-1 decryption
output is overcome efficiently in this work. The proposed scheme
(known as the $AA_{\beta}$-cryptosystem) has its encryption speed
having a complexity order faster than the Diffie-Hellman Key
Exchange, El-Gammal, RSA and ECC. It can also transmit a larger data
set securely when compared to existing asymmetric schemes. It has a
simple mathematical structure. Thus, it would have low computational
requirements and would enable communication devices with low
computing power to deploy secure communication procedures
efficiently.
\end{abstract}

\maketitle

%\section*{This is an unnumbered first-level section head}
%This is an example of an unnumbered first-level heading.

%% The correct journal style for \specialsection is all uppercase; a known bug
%% in amsart.cls prevents this, so input must be uppercase until it is fixed.
%\specialsection*{This is a Special Section Head}
%\specialsection*{THIS IS A SPECIAL SECTION HEAD}
%This is an example of a special section head%
%%%%%%%%%%%%%%%%%%%%%%%%%%%%%%%%%%%%%%%%%%%%%%%%%%%%%%%%%%%%%%%%%%%%%%%%
%\footnote{Here is an example of a footnote. Notice that this footnote
%text is running on so that it can stand as an example of how a footnote
%with separate paragraphs should be written.
%\par
%And here is the beginning of the second paragraph.}%
%%%%%%%%%%%%%%%%%%%%%%%%%%%%%%%%%%%%%%%%%%%%%%%%%%%%%%%%%%%%%%%%%%%%%%%%

\section{Introduction}
The Rabin cryptosystem that utilizes the square root modulo
problem, is said to be an optimal implementation of RSA with the
encryption exponent $e=2$ \cite{rabin}. However, the situation of
a 4-to-1 mapping during decryption has deterred it from being
utilized. Mechanisms to ensure its possible implementation have
been proposed, however the solutions either still have a
possibility of decryption failure or the performance against the
RSA is inadequate. As a consequence other underlying cryptographic
primitives have taken centre stage. The discrete log problem (DLP)
and the elliptic curve discrete log problem (ECDLP) has been the
source of security for cryptographic schemes such as the Diffie
Hellman key exchange (DHKE) procedure, El-Gamal cryptosystem and
elliptic curve cryptosystem (ECC) respectively \cite{diffie},
\cite{koblitz}. As for the world renowned RSA cryptosystem, the
inability to find the $e$-th root of the ciphertext C modulo N
from the congruence relation $C\equiv M^e (\textrm{mod }N)$
coupled with the inability to factor $N=pq$ for large primes $p$
and $q$ is its fundamental source of security \cite{rsa}. It has
been suggested that the ECC is able to produce the same level of
security as the RSA with shorter key length. Thus, ECC should be
the preferred asymmetric cryptosystem when compared to RSA
\cite{vanstone}. Hence, the notion ``cryptographic efficiency" is
conjured. That is, to produce an asymmetric cryptographic scheme
that could produce security equivalent to a certain key length of
the traditional RSA but utilizing shorter keys. However, in
certain situations where a large block needs to be encrypted, RSA
is the better option than ECC because ECC would need more
computational effort to undergo such a task \cite{scott}. Thus,
adding another characteristic toward the notion of ``cryptographic
efficiency" which is it must be less ``computational intensive"
and be able to transmit large blocks of data (when needed). In
1998 the cryptographic scheme known as NTRU was proposed with
better "cryptographic efficiency" relative to RSA and ECC
\cite{hoffstein2} \cite{hermans} \cite{hoffstein3}. NTRU has a
complexity order of $O(n^{2})$ for both encryption and decryption
as compared to DHKE, EL-Gammal, RSA and ECC (all have a complexity
order of $O(n^{3})$). As such, in order to design a
state-of-the-art public key mechanism, the following are
characteristics that must be ``ideally" achieved (apart from other
well known security issues):

\begin{enumerate}
    \item Shorter key length. If possible shorter than ECC 160-bits.
    \item Speed. To have speed of complexity order $O(n^2)$ for both encryption and decryption.
    \item Able to increase data set to be transmitted asymmetrically. That is, not to be restricted in size because of the mathematical structure.
    \item Simple mathematical structure for easy implementation.
\end{enumerate}

In this paper, we attempt to efficiently enhance an asymmetric
cryptosystem based on the square root problem as its cryptogrpahic
primitive. That is, we will efficiently redesign Rabin's cryptosytem
that has decryption failure due to a 4-to-1 mapping. We will show
that in our design for encryption, it does not involve ``expensive"
mathematical operation. Only basic multiplication is required
without division or modulo operation.\\

The layout of this paper is as follows. The Rabin cryptosystem
will be discussed in Section 2. Previous designs to overcome the
decryption failure of the Rabin cryptosystem will also be
presented here. The mechanism of the $AA_{\beta}$-cryptosystem
will be detailed in Section 3. In Section 4, the authors detail
the decryption process and provide a proof of correctness. An
example will also be presented. Continuing in Section 5, we will
discuss a congruence attack, a Coppersmith type attack and a
Euclidean division attack. An analysis of lattice based attack
will be given in Section 6. Section 7 will be about the underlying
security principles of the $AA_{\beta}$ scheme. A table of
comparison between the $AA_{\beta}$ scheme against RSA,ECC and
NTRU is given in Section 8. Finally, we shall conclude in Section
9.

\section{The Rabin Cryptosystem}

Let us begin by stating that the communication process is between
A (Along) and B (Busu), where Busu is sending information to Along
after encrypting the plaintext with Along's public key.\\

$\bullet$ \textbf{Key Generation by Along}\\
\newline
\indent INPUT: Generate two random $n$-bit prime numbers $p$ and
$q$.\\
\indent OUTPUT: The public key $N=pq$ and the private key pair
$(p,q)$.

\begin{remark}
To simplify computation one may choose $p\equiv q\equiv3(\textrm{mod
}4)$.
\end{remark}

$\bullet$ \textbf{Encryption by Busu}\\
\newline
\indent INPUT: The public key $N$ and the message $M$ where $0\leq M\leq N-1$. \\
\indent OUTPUT: The ciphertext $C=M^{2}(\textrm{mod }N)$.\\
%\newpage

$\bullet$ \textbf{Decryption by Along}\\
\newline
\indent INPUT: The private key pair $(p,q)$ and the ciphertext $C$. \\
\indent OUTPUT: The plaintext $M$.

\begin{remark}
Computing the square roots of $C$ modulo $N$ using the private keys
$(p,q)$, would result in 4 square roots of $C$ modulo $N$. Thus, the
``infamous" decryption failure scenario.
\end{remark}

\subsection{Redundancy Schemes for Unique Decryption}
In order to overcome the decryption failure, it is necessary to
have a scheme that could provide the plaintext upon decryption
without having to guess. We provide here a brief description of 3
existing solution techniques.\\

\begin{enumerate}
  \item \textbf{Redundancy in the message \cite{menezes}}. This scheme has a
  probability decryption failure of approximately $\frac{1}{2^{l-1}}$
  where $l$ is the least significant binary string of the message.\\
  \item \textbf{Extra bits \cite{galbraith}}. One will send 2 extra bits of
  information to specify the square root. The encryption process requires the
  computation of the Jacobi symbol. This results in a computational overhead
  which is much more than just computing a single square modulo $N$. \\
  \item \textbf{Williams technique \cite{williams}}. The encryption process
  requires the encrypter   to compute a Jacobi symbol. Hence,
  losing the performance advantage of Rabin over RSA (as in point no.2).\\
\end{enumerate}

In the next section, we will present an efficient enhancement of
Rabin's cryptosystem that does not inherit the above properties.

\section{The $AA_{\beta}$ Public Key Cryptosystem}

$\bullet$ \textbf{Key Generation by Along}\\
\newline
\indent INPUT: The size $n$ of the prime numbers.\\
\indent OUTPUT: A public key tuple $(n,e_{A1},e_{A2})$ and a
private key pair $(pq, d)$.\\
\begin{enumerate}
    \item Generate two random and distinct $n$-bit strong primes $p$ and
    $q$ satisfying
$$
\left\{%
\begin{array}{ll}
    p\equiv3(\textrm{mod }4), & \hbox{$2^{n}<p<2^{n+1}$,} \\
    q\equiv3(\textrm{mod }4), & \hbox{$2^{n}<q<2^{n+1}$.} \\
\end{array}%
\right.
$$
    \item Choose random $d$ such that $d>(p^{2}q)^{\frac{4}{9}}$.
    \item Choose random integer $e$ such that
    $ed\equiv 1(\textrm{mod } pq)$ and add multiples of $pq$ until
    $2^{3n+4}<e<2^{3n+6}$ (if necessary).
    \item Set $e_{A1}=p^{2}q$. We have $2^{3n}<e_{A1}<2^{3n+3}.$
    \item Set $e_{A2}=e$.
    \item Return the public key tuple $(n,e_{A1},e_{A2})$ and a
private key pair $(pq,d)$.\\
\end{enumerate}

\noindent We also have the fact that $2^{2n}<pq<2^{2n+2}.$\\

$\bullet$ \textbf{Encryption by Busu}\\
\newline
\indent INPUT: The public key tuple $(n,e_{A1},e_{A2})$ and the message \textbf{M}.\\
\indent OUTPUT: The ciphertext $C$.\\
\begin{enumerate}
    \item Represent the message \textbf{M} as a $4n$-bit integer $m$ within the interval $(2^{4n-1},2^{4n})$ with
    $m=m_{1}\cdot2^{n}+m_{2}$ where $m_{1}$ is a $3n+1$-bit integer within the interval $(2^{3n},2^{3n+1})$
    and $m_{2}$ is a $n-1$-bit integer within the interval $(2^{n-2},2^{n-1})$.
    \item Choose a random $n$-bit integer $k_1$ within the interval $(2^{n-1},2^{n})$ and compute
    $U=m_{1}\cdot2^{n}+k_{1}$. We have $2^{4n}<U<2^{4n+1}$.
    \item Choose a random $n$-bit integer $k_2$ within the interval $(2^{n-1},2^{n})$ and compute and compute
    $V=m_{2}\cdot2^{n}+k_{2}$. We have $2^{2n-2}<V<2^{2n-1}$.
    \item Compute $C=Ue_{A1} + V^{2}e_{A2}$.
    \item Send ciphertext $C$ to Along.
\end{enumerate}

\section{Decryption}

\begin{proposition}
Decryption by Along is conducted in the following steps:\\
\newline
\indent INPUT: The private key $(pq, d)$ and the ciphertext $C$.\\
\indent OUTPUT: The plaintext \textbf{M}.\\
\begin{enumerate}
    \item Compute $W\equiv Cd(\textrm{mod }pq).$
    \item Compute $M_{1}\equiv q^{-1}(\textrm{mod }p)$ and $M_{2}\equiv p^{-1}(\textrm{mod
    }q).$
    \item Compute $$x_{p}\equiv W^{\frac{p+1}{4}}(\textrm{mod }p),
    x_{q}\equiv W^{\frac{q+1}{4}}(\textrm{mod }q).$$
    \item Compute $$V_{1}\equiv x_{p}M_{1}q+x_{q}M_{2}p \textrm{ }(\textrm{mod
    }pq),$$ $$V_{2}\equiv x_{p}M_{1}q-x_{q}M_{2}p\textrm{ }(\textrm{mod
    }pq),$$ $$V_{3}\equiv -x_{p}M_{1}q+x_{q}M_{2}p\textrm{ }(\textrm{mod
    }pq),$$ $$V_{4}\equiv -x_{p}M_{1}q-x_{q}M_{2}p\textrm{ }(\textrm{mod
    }pq).$$
    \item For $i=1,2,3,4$ compute
    $U_{i}=\frac{C-V_{i}^{2}e_{A2}}{e_{A1}}.$
    \item Sort the pair $(U_{j},V_{j})$ for integer $U_{j}.$
    \item Compute integral part $m_{1}=\lfloor
    \frac{U_{j}}{2^{n}}\rfloor.$
    \item Compute integral part $m_{2}=\lfloor
    \frac{V_{j}}{2^{n}}\rfloor.$
    \item Form the integer $m=m_{1}\cdot2^{n}+m_{2}.$
    \item Transform the number $m$ to the message \textbf{M}.
    \item Return the message \textbf{M}.\\
\end{enumerate}
\end{proposition}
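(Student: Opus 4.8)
The plan is to verify that the decryption procedure recovers the plaintext by tracing the arithmetic relation $C = Ue_{A1} + V^2 e_{A2}$ through each step. The key observation is that $e_{A1} = p^2 q$ and $e_{A2} = e$ with $ed \equiv 1 \pmod{pq}$, so reducing $C$ modulo $pq$ and multiplying by $d$ should strip away the $U e_{A1}$ term (since $p^2 q \equiv 0 \pmod{pq}$) and leave $W \equiv V^2 \pmod{pq}$. First I would establish this congruence carefully: $Cd \equiv (Ue_{A1} + V^2 e)d \equiv U(p^2q)d + V^2(ed) \equiv 0 + V^2 \pmod{pq}$, so $W \equiv V^2 \pmod{pq}$. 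Crucially one must check that $V^2 < pq$ so that $W$ equals $V^2$ exactly as an integer, not merely modulo $pq$; from the stated bounds $2^{2n-2} < V < 2^{2n-1}$ we get $V^2 < 2^{4n-2}$, while $pq > 2^{2n}$... wait, that gives $V^2 < 2^{4n-2}$ against $pq < 2^{2n+2}$, so actually $V^2$ is much larger than $pq$ — so $W$ is $V^2 \bmod pq$ and is genuinely reduced. This means step 3 onward is really about extracting the correct square root.

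Next I would handle the square-root extraction. Since $p \equiv q \equiv 3 \pmod 4$, the standard fact is that if $W$ is a quadratic residue mod $p$ then $W^{(p+1)/4} \bmod p$ is a square root of $W$ mod $p$, and similarly mod $q$; hence $x_p^2 \equiv W \pmod p$ and $x_q^2 \equiv W \pmod q$. Using the CRT coefficients $M_1 \equiv q^{-1} \pmod p$ and $M_2 \equiv p^{-1} \pmod q$, the four combinations $V_i \equiv \pm x_p M_1 q \pm x_q M_2 p \pmod{pq}$ are precisely the four square roots of $W$ modulo $pq$. One of them must be the true $V \bmod pq$; but since $V$ itself satisfies $2^{2n-2} < V < 2^{2n-1} < 2^{2n} < pq$, the true $V$ lies in the residue range $[0, pq)$ and hence equals one of the $V_i$ exactly. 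So among the four candidates, the correct one is recovered as an integer.

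Then I would show how step 5 disambiguates. For each candidate $V_i$, compute $U_i = (C - V_i^2 e_{A2})/e_{A1}$. For the true $V$, this yields exactly the integer $U$; for the three spurious roots, $C - V_i^2 e_{A2}$ will in general not be divisible by $e_{A1} = p^2 q$ (one would argue this using the size of $C$ and the fact that the spurious $V_i$ differ from $V$ by a nontrivial factor, making the numerator's division by $p^2q$ fail, or at worst produce a $U_i$ outside the legitimate range $2^{4n} < U < 2^{4n+1}$). Selecting the pair $(U_j, V_j)$ with $U_j$ a (correctly-sized) integer isolates the genuine $(U, V)$. Finally, since $U = m_1 \cdot 2^n + k_1$ with $k_1 < 2^n$ and $m_1 \in (2^{3n}, 2^{3n+1})$, we get $m_1 = \lfloor U/2^n \rfloor$; likewise $V = m_2 \cdot 2^n + k_2$ with $k_2 < 2^n$ gives $m_2 = \lfloor V/2^n \rfloor$; reassembling $m = m_1 \cdot 2^n + m_2$ and decoding yields \textbf{M}.

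The main obstacle I anticipate is step 6 — proving that exactly one of the four $V_i$ produces a valid integer $U_j$ in the correct interval, so that the sort is unambiguous. The congruence and square-root parts are routine; the delicate point is the \emph{uniqueness} of the decryption, which is the whole purpose of this redesign of Rabin. I would need to argue that for a spurious root $V_i \ne V$, either $e_{A1} \nmid (C - V_i^2 e_{A2})$, or if it happens to divide, the resulting quotient $U_i$ is forced outside $(2^{4n}, 2^{4n+1})$ by size considerations (comparing $C \approx U e_{A1} \approx 2^{4n} \cdot 2^{3n} = 2^{7n}$ against $V_i^2 e_{A2} \lesssim 2^{4n-2} \cdot 2^{3n+6} = 2^{7n+4}$, so the differences are of controlled magnitude). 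Pinning down this range argument rigorously — ideally showing the spurious $U_i$ are either non-integral or of the wrong bit-length — is where the real work lies, and it is what guarantees the claimed failure-free decryption.
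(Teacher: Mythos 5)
Your overall route coincides with the paper's: reduce $C$ modulo $pq$ after multiplying by $d$ to get $W\equiv V^{2}\ (\textrm{mod }pq)$, extract the four square roots via $x_{p}\equiv W^{\frac{p+1}{4}}\ (\textrm{mod }p)$, $x_{q}\equiv W^{\frac{q+1}{4}}\ (\textrm{mod }q)$ and the Chinese Remainder Theorem (the paper's Lemmas on $\pm x_{p},\pm x_{q}$), and then recover $m_{1},m_{2}$ from $U$ and $V$ by the floor computations. All of that part of your argument is correct and matches the paper.

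The genuine gap is exactly the step you yourself flag as ``where the real work lies'': you never prove that only one of the four candidates yields a valid $(U_{j},V_{j})$, and the route you sketch for it would not go through. You propose to rule out spurious roots by a magnitude comparison, bounding $V_{i}^{2}e_{A2}\lesssim 2^{4n-2}\cdot 2^{3n+6}$ and hoping the quotient $U_{i}$ falls outside $(2^{4n},2^{4n+1})$. But the bound $V_{i}<2^{2n-1}$ holds only for the legitimate $V$; the spurious roots live anywhere in $[0,pq)$ with $pq$ as large as $2^{2n+2}$, so this size estimate does not apply to them, and a range argument on $U_{i}$ alone cannot be pinned down this way. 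The paper instead proves a clean divisibility lemma: if $(U_{1},V_{1})$ and $(U_{2},V_{2})$ were two solutions of $C=Ue_{A1}+V^{2}e_{A2}$ with $V_{1}\neq V_{2}$ and $V_{i}<2^{2n-1}$, then $(U_{2}-U_{1})p^{2}q=(V_{1}+V_{2})(V_{1}-V_{2})e_{A2}$, and since $\gcd(p^{2}q,e_{A2})=1$ one of $V_{1}\pm V_{2}$ must absorb a factor of size at least $\min(p^{2},pq)>2^{2n}$, while $|V_{1}\pm V_{2}|<2\cdot 2^{2n-1}=2^{2n}$ and neither factor is zero --- a contradiction. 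That is the argument your proposal is missing; without it (or an equivalent), the claim that step 6 isolates a unique integral $U_{j}$, and hence that decryption never fails, is unsupported.
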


\noindent We now proceed to give a proof of correctness.
\newline

Along will begin by computing $W\equiv Cd\equiv V^{2}(\textrm{mod
}pq).$ Along will then have to solve $W\equiv V^{2}(\textrm{mod
}pq)$ using the Chinese Remainder Theorem.\\

\begin{lemma}
%\noindent \textbf{Lemma 1}\\
Let $p$ and $q$ be two different primes such that
$p\equiv3(\textrm{mod }4)$ and $q\equiv3(\textrm{mod }4)$. Define
$x_{p}$ and $x_{q}$ by $$x_{p}\equiv W^{\frac{p+1}{4}}(\textrm{mod
}p), x_{q}\equiv W^{\frac{q+1}{4}}(\textrm{mod }q).$$ Then the
solutions of the equation $x^{2}\equiv W(\textrm{mod }p)$ are $\pm
x_{p}(\textrm{mod }p)$ and the solutions of the equation
$x^{2}\equiv W(\textrm{mod }q)$ are $\pm x_{q}(\textrm{mod }q).$
\end{lemma}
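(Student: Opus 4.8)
The plan is to verify directly that $x_p \equiv W^{(p+1)/4} \pmod p$ squares to $W$ modulo $p$, and then argue that a quadratic congruence modulo a prime has at most two solutions, so $\pm x_p$ exhaust them. First I would compute
$$
x_p^2 \equiv W^{(p+1)/2} \equiv W \cdot W^{(p-1)/2} \pmod p.
$$
The crucial observation is that $W$ is a quadratic residue modulo $p$: by construction $W \equiv Cd \equiv V^2 \pmod{pq}$, hence $W \equiv V^2 \pmod p$, so $W$ is a square modulo $p$. (If $p \mid V$ the statement is trivial since then $W \equiv 0$ and $x_p \equiv 0$; so assume $p \nmid W$.) Then Euler's criterion gives $W^{(p-1)/2} \equiv 1 \pmod p$, and therefore $x_p^2 \equiv W \pmod p$. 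The same computation works verbatim for $q$, using $q \equiv 3 \pmod 4$ so that $(q+1)/4$ is an integer and $W \equiv V^2 \pmod q$.

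Next I would show these are the only solutions. If $x^2 \equiv W \pmod p$ and also $x_p^2 \equiv W \pmod p$, then $p \mid (x - x_p)(x + x_p)$, and since $p$ is prime this forces $x \equiv \pm x_p \pmod p$. Moreover $x_p \not\equiv -x_p \pmod p$ when $p \nmid W$ (as $p$ is odd), so there are exactly two distinct roots $\pm x_p \pmod p$, and symmetrically $\pm x_q \pmod q$. This completes the claim.

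The only point requiring care — and the main obstacle, such as it is — is justifying that $(p+1)/4$ is a genuine integer exponent and that $W$ is indeed a quadratic residue; both follow from the hypotheses ($p \equiv 3 \pmod 4$) and from the earlier identity $W \equiv V^2 \pmod{pq}$ established just before the lemma. Everything else is the standard Euler-criterion argument for extracting square roots modulo primes congruent to $3$ modulo $4$. I would also note in passing that this lemma is exactly what feeds the Chinese Remainder Theorem step: combining the two-element solution sets modulo $p$ and modulo $q$ yields the four candidates $V_1, V_2, V_3, V_4$ listed in the Proposition.
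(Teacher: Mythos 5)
Your proof is correct. The paper actually states this lemma with no proof at all, and your argument --- checking $x_p^2 \equiv W\cdot W^{(p-1)/2} \equiv W \pmod p$ via Euler's criterion (using that $W \equiv V^2 \pmod{pq}$ is a quadratic residue and that $p \equiv 3 \pmod 4$ makes the exponent integral), then invoking the fact that $x^2 - W$ has at most two roots modulo a prime --- is precisely the standard argument the authors implicitly rely on, including the careful handling of the degenerate case $p \mid V$.
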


\begin{lemma}
%\noindent \textbf{Lemma 2}\\
Let $p$ and $q$ be two different primes such that
$p\equiv3(\textrm{mod }4)$ and $q\equiv3(\textrm{mod }4)$. Define
$x_{p}$ and $x_{q}$ by $$x_{p}\equiv W^{\frac{p+1}{4}}(\textrm{mod
}p), x_{q}\equiv W^{\frac{q+1}{4}}(\textrm{mod }q).$$ Define
$M_{1}\equiv q^{-1}(\textrm{mod }p)$ and $M_{2}\equiv
p^{-1}(\textrm{mod }q).$ Then the solutions of the equation
$V^{2}\equiv W(\textrm{mod }pq)$ are $$V_{1}\equiv
x_{p}M_{1}q+x_{q}M_{2}p\textrm{ }(\textrm{mod }pq),$$
$$V_{2}\equiv x_{p}M_{1}q-x_{q}M_{2}p\textrm{ }(\textrm{mod
}pq),$$ $$V_{3}\equiv -x_{p}M_{1}q+x_{q}M_{2}p\textrm{
}(\textrm{mod }pq),$$ $$V_{4}\equiv
-x_{p}M_{1}q-x_{q}M_{2}p\textrm{ }(\textrm{mod }pq).$$
\end{lemma}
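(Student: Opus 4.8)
The plan is to reduce the congruence $V^2 \equiv W \pmod{pq}$ to the pair of congruences modulo $p$ and modulo $q$ via the Chinese Remainder Theorem, and then simply verify that each of the four stated expressions $V_1, V_2, V_3, V_4$ realizes one of the four sign combinations for the roots $(\pm x_p \bmod p, \pm x_q \bmod q)$ supplied by the previous lemma. First I would recall that since $p$ and $q$ are distinct primes, the map $\mathbb{Z}_{pq} \to \mathbb{Z}_p \times \mathbb{Z}_q$, $V \mapsto (V \bmod p, V \bmod q)$, is a ring isomorphism; hence $V^2 \equiv W \pmod{pq}$ holds if and only if $V^2 \equiv W \pmod p$ and $V^2 \equiv W \pmod q$ hold simultaneously. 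By the preceding lemma the solution set of the first is $\{x_p, -x_p\} \bmod p$ and of the second is $\{x_q, -x_q\} \bmod q$, so there are exactly four solutions modulo $pq$, indexed by the four choices of signs.

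Next I would write down the explicit CRT reconstruction. With $M_1 \equiv q^{-1} \pmod p$ and $M_2 \equiv p^{-1} \pmod q$, the standard CRT formula tells us that the unique $V \bmod pq$ with $V \equiv a \pmod p$ and $V \equiv b \pmod q$ is
$$
V \equiv a\,M_1 q + b\,M_2 p \pmod{pq}.
$$
Indeed, modulo $p$ the term $b\,M_2 p$ vanishes and $M_1 q \equiv 1$, giving $V \equiv a$; modulo $q$ the term $a\,M_1 q$ vanishes and $M_2 p \equiv 1$, giving $V \equiv b$. Substituting $(a,b) = (x_p, x_q)$, $(x_p, -x_q)$, $(-x_p, x_q)$, $(-x_p, -x_q)$ produces precisely $V_1, V_2, V_3, V_4$ as displayed in the statement. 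Since these four sign choices exhaust all combinations of the roots modulo $p$ and modulo $q$, and since the CRT correspondence is a bijection, the $V_i$ are exactly the solutions of $V^2 \equiv W \pmod{pq}$.

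There is essentially no hard part here; the result is a direct packaging of CRT plus the prior lemma. The only point that deserves a line of care is the implicit assumption that $W$ is a quadratic residue modulo both $p$ and $q$ — which is automatic in the decryption context because $W \equiv V^2 \pmod{pq}$ for the genuine $V$ chosen during encryption, so that $x_p^2 \equiv W \pmod p$ and $x_q^2 \equiv W \pmod q$ as guaranteed by the previous lemma. One might also note for completeness that the four $V_i$ are genuinely distinct modulo $pq$: $V_1 \not\equiv V_2$ since they differ modulo $q$ (as $x_q \not\equiv -x_q \pmod q$, using $q$ odd and $x_q \not\equiv 0$), and similarly for the other pairs. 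With that, the proof is complete.
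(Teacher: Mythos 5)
Your proposal is correct and follows essentially the same route as the paper: the paper likewise reduces $V^{2}\equiv W\ (\textrm{mod }pq)$ via the Chinese Remainder Theorem to the four sign systems $V\equiv\pm x_{p}\ (\textrm{mod }p)$, $V\equiv\pm x_{q}\ (\textrm{mod }q)$ and reconstructs each solution with the formula $V\equiv aM_{1}q+bM_{2}p\ (\textrm{mod }pq)$. Your write-up is in fact somewhat more careful than the paper's (which gives no formal proof, only the CRT sketch), since you also verify the reconstruction formula, the distinctness of the four roots, and the implicit quadratic-residue hypothesis on $W$.
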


%\begin{proof}
%\noindent \textbf{Proof}\\
\noindent To solve the equation $V^{2}\equiv W(\textrm{mod }pq)$,
we use the Chinese Remainder Theorem. Consider the equations
$x^{2}_{p}\equiv W (\textrm{mod }p)$ and $x^{2}_{q}\equiv
W(\textrm{mod }q).$ Then the solution of the equation $V^{2}\equiv
W(\textrm{mod }pq)$ are the four solutions of the four systems
$$
\left\{%
\begin{array}{ll}
    V\equiv \pm x_{p}(\textrm{mod }p)\\
    V\equiv \pm x_{q}(\textrm{mod }q)\\
\end{array}%
\right.
$$
Define $M_{1}\equiv q^{-1}(\textrm{mod }p)$ and $M_{2}\equiv
p^{-1}(\textrm{mod }q).$ We will get explicitly $$V_{1}\equiv
x_{p}M_{1}q+x_{q}M_{2}p\textrm{ }(\textrm{mod }pq),$$
$$V_{2}\equiv x_{p}M_{1}q-x_{q}M_{2}p\textrm{ }(\textrm{mod
}pq),$$ $$V_{3}\equiv -x_{p}M_{1}q+x_{q}M_{2}p\textrm{
}(\textrm{mod }pq),$$ $$V_{4}\equiv
-x_{p}M_{1}q-x_{q}M_{2}p\textrm{ }(\textrm{mod }pq).$$

\noindent It can be seen that solving $V^{2}\equiv W(\textrm{mod
}pq)$, we
will get four solutions $V_{i}$ for $i=1,2,3,4$.\\
\newline
We prove below that only one of them leads to the correct
decryption and consequently, there is no decryption failure.
%\end{proof}

\begin{lemma}
%\noindent \textbf{Lemma 3}\\
Let $C$ be an integer representing a ciphertext encrypted by the
$AA_{\beta}$ algorithm. The equation $C=Ue_{A1}+V^{2}e_{A2}$ has
only one solution satisfying $V<2^{2n-1}.$
\end{lemma}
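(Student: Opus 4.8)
The plan is to prove the stronger assertion that \emph{any} two pairs of non-negative integers $(U_1,V_1)$ and $(U_2,V_2)$ solving $C=Ue_{A1}+V^2e_{A2}$ with $V_1,V_2<2^{2n-1}$ must coincide; since Busu's encryption produces one such pair (the constructed $V$ lies in $(2^{2n-2},2^{2n-1})$), the lemma follows at once. From $U_1e_{A1}+V_1^2e_{A2}=U_2e_{A1}+V_2^2e_{A2}$ I would pass to $(U_1-U_2)e_{A1}=(V_2^2-V_1^2)e_{A2}$ and first record that $\gcd(e_{A1},e_{A2})=1$: indeed $e_{A1}=p^2q$, while $e_{A2}=e$ satisfies $ed\equiv1\pmod{pq}$, hence $\gcd(e,pq)=1$ and so $\gcd(e,p^2q)=1$. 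Therefore $e_{A1}\mid V_2^2-V_1^2$, i.e. $p^2q\mid(V_2-V_1)(V_2+V_1)$.

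The core of the argument is then to extract $V_1=V_2$ from this divisibility together with the bounds $0\le V_1,V_2<2^{2n-1}$ and $2^n<p,q<2^{n+1}$ (so $p^2>2^{2n}$ and $q>2^n$). I would split according to how the prime $p$ divides the two factors. If $p\nmid(V_2+V_1)$ then $p^2\mid V_2-V_1$, and since $\abs{V_2-V_1}<2^{2n-1}<2^{2n}<p^2$ this forces $V_1=V_2$. If $p\mid(V_2+V_1)$ but $p\nmid(V_2-V_1)$ then $p^2\mid V_2+V_1$, and since $0\le V_2+V_1<2^{2n}<p^2$ this forces $V_1=V_2=0$. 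Lastly, if $p$ divides both factors then $p\mid2V_1$ and $p\mid2V_2$, hence $p\mid V_1$ and $p\mid V_2$; writing $V_i=pV_i'$ with $0\le V_i'<2^{n-1}$ turns the divisibility into $q\mid(V_2'-V_1')(V_2'+V_1')$, and the analogous magnitude comparison against $q>2^n$ gives $V_1'=V_2'$, whence $V_1=V_2$. Once $V_1=V_2$, the original equation gives $U_1=U_2$, completing the proof.

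The obstacle I expect to matter is that uniqueness cannot be read off from a crude size estimate: for the chosen parameters both summands $Ue_{A1}$ and $V^2e_{A2}$ are of magnitude about $2^{7n}$, and even $V_2^2-V_1^2$ can be as large as $2^{4n-2}$, comfortably larger than $e_{A1}\approx2^{3n}$, so $V_2^2-V_1^2$ need not vanish on size grounds alone. The divisibility $p^2q\mid V_2^2-V_1^2$ afforded by $\gcd(e_{A1},e_{A2})=1$ is what carries the proof, and the slightly delicate case is the last one, where $p$ divides $V_2\pm V_1$ only to first order in each factor and one has to cancel the common $p$ and recurse on $q$. It is also worth recording that the degenerate outcome $V_1=V_2=0$ cannot occur for a genuine ciphertext, since $C\equiv V^2e_{A2}\not\equiv0\pmod{e_{A1}}$ because $p^2q\nmid V^2$ whenever $0<V<pq$; so the unique solution is precisely the pair $(U,V)$ produced by Busu.
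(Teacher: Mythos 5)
Your proof is correct and follows essentially the same route as the paper's: reduce to $p^{2}q\mid(V_1+V_2)(V_1-V_2)$ via $\gcd(e_{A1},e_{A2})=1$, then rule out the resulting divisibilities by comparing $|V_1\pm V_2|<2^{2n}$ against $p^{2},pq>2^{2n}$. Your case split (cancelling $p^{2}$ and recursing on $q$ in the last case, and explicitly disposing of the degenerate outcome $V_1=V_2=0$) is organized slightly differently and is a bit more careful than the paper's, but the underlying argument is the same.
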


\begin{proof}
%\noindent \textbf{Proof}\\
Suppose for contradiction that there are two couples of solutions
$(U_{1},V_{1})$ and $(U_{2},V_{2})$ of the equation
$C=Ue_{A1}+V^{2}e_{A2}$ with $V_{1}\neq V_{2}$ and $V_{i} <
2^{2n-1}.$ Then
$U_{1}e_{A1}+V_{1}^{2}e_{A2}=U_{2}e_{A1}+V_{2}^{2}e_{A2}.$ Using
$e_{A1}=p^{2}q$, this leads to
$$(U_{2}-U_{1})p^{2}q=(V_{1}+V_{2})(V_{1}-V_{2})e_{A2}.$$
Since gcd$(p^{2}q,e_{A2})=1$, then
$p^{2}q|(V_{1}+V_{2})(V_{1}-V_{2})$ and the prime numbers $p$ and
$q$ satisfy one of the conditions $$p^{2}|(V_{1}\pm V_{2})
\textrm{  or  } \left\{%
\begin{array}{ll}
    pq|(V_{1}\pm V_{2})\\
    \textit{p  } |(V_{1}\mp V_{2})\\
\end{array}%
\right.$$\\
Observe that $p^{2}>2^{2n}$ and $pq>2^{2n}$ while $|V_{1}\pm
V_{2}|< 2\cdot2^{2n-1}=2^{2n}.$ This implies that none of these
conditions is possible. Hence the equation $C=Ue_{A1}+V^{2}e_{A2}$
has only one solution with the parameters of the scheme
\end{proof}

\subsection{Example} Let $n=16$. Along will choose the primes
$p=62683$ and $q=62483$. The public keys will be
\begin{enumerate}
\item $e_{A1}=245505609868187$
\item $e_{A2}=4106878163802480$
\end{enumerate}
The private keys will be
\begin{enumerate}
\item $pq=3916621889$
\item $d=2486483$
\end{enumerate}
Busu's message will contain the following parameters
\begin{enumerate}
\item $m_{1}=544644664056570$
\item $m_{2}=21777$
\end{enumerate}
Busu will also generate the following ephemeral random session
keys
\begin{enumerate}
\item $k_{1}=54433$
\item $k_{2}=33079$
\end{enumerate}
Busu will then generate
\begin{enumerate}
\item $U=35693832703611425953$
\item $V=1427210551$ and
consequently $V^{2}=2036929956885723601$
\end{enumerate}
The ciphertext will be $C=17128459327562266456602243879187691$.
\newline
To decrypt Along will first compute $W=3215349249$. Along will
then obtain the following root values $$V_{1}=318887097,$$
$$V_{2}=2489411338,$$ $$V_{3}=1427210551,$$ and
$$V_{4}=3597734792.$$ Only $U_{3}=\frac{C-V_{3}^{2}e_{A2}}{e_{A1}}$ will produce an integer
value. That is $U_{3}=35693832703611425953$. Finally, $m_{1}$ and
$m_2$ can be obtained. $_\Box$

 \section{Basic Attacks}
\vspace{0.25cm}

\subsection{Congruence attack}

In this subsection we will observe the security of the ciphertext
equation $C=Ue_{A1}+V^{2}e_{A2}$ when it is treated as a
Diophantine equation. We will observe that solving the
corresponding Diophantine equation parametric solution set for the
unknown parameters $U$ and
$V^2$ will result in exponentially many candidates to choose from.\\

\noindent From $C=Ue_{A1}+V^{2}e_{A2}$ and since
gcd$(e_{A1},e_{A2})=1$ we have
$$U\equiv Ce^{-1}_{A1}\equiv a\textrm{ }(\textrm{mod }e_{A2}).$$
Hence $U=a+e_{A2}j$ for some $j\in\mathbb{Z}$. Replacing into $C$
we have
$$C=Ue_{A1}+V^{2}e_{A2}=(a+e_{A2}j)e_{A1}+V^{2}e_{A2}.$$
Then,
$$V^{2}=\frac{C-(a+e_{A2}j)e_{A1}}{e_{A2}}=\frac{C-e_{A1}a}{e_{A2}}-e_{A1}j,$$
where $\frac{C-e_{A1}a}{e_{A2}}=b\in \mathbb{Z}$. It follows that
the equation $C=Ue_{A1}+V^{2}e_{A2}$ has the parametric solutions
$$U=a+e_{A2}\textit{j } \textrm{and } V^{2}=b-e_{A1}j.$$\\

$\bullet$ \underline{Computing with U}\\

\noindent To find $U=a+e_{A2}j$, we should find an integer $j$
such that $2^{4n}<U<2^{4n+1}$. This gives
$$\frac{2^{4n}-a}{e_{A2}}<j<\frac{2^{4n+1}-a}{e_{A2}}.$$ We know that $2^{3n+4}<e_{A2}<2^{3n+6}.$
Then the difference between the upper and the lower bound is
$$\frac{2^{4n+1}-a}{e_{A2}}-\frac{2^{4n}-a}{e_{A2}}=\frac{2^{4n}}{e_{A2}}>\frac{2^{4n}}{2^{3n+6}}=2^{n-6}.$$
Hence the difference is very large and finding the correct $j$ is
infeasible.\\

$\bullet$ \underline{Computing with $V^{2}$}\\

\noindent To find $V^{2}=b-e_{A1}j$, we should find an integer $j$
such that $2^{4n-4}<V<2^{4n-2}$. This gives
$$\frac{2^{4n-4}-b}{-e_{A1}}>j>\frac{2^{4n-2}-b}{-e_{A1}}.$$ We know that $2^{3n}<e_{A1}<2^{3n+3}.$
Then the difference between the upper and the lower bound is
$$\frac{2^{4n-4}-b}{-e_{A1}}-\frac{2^{4n-2}-b}{-e_{A1}}=\frac{3\cdot2^{4n-4}}{e_{A1}}=3\cdot2^{n-7}.$$
Hence the difference is very large and finding the correct $j$ is
infeasible.\\

\subsection{Coppersmith type attack}

\begin{theorem}
%\noindent \textbf{Theorem 1}\\
Let $N$ be an integer of unknown factorization. Furthermore, let
$f_{N}(x)$ be an univariate, monic polynomial of degree $\delta$.
Then we can find all solutions $x_0$ for the equation
$f_{N}(x)\equiv 0(\textrm{mod }N)$ with
$$|x_{0}|<N^{\frac{1}{\delta}}.$$ in time polynomial in
$(logN,\delta).$
\end{theorem}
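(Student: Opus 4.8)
The plan is to follow the lattice-based strategy of Coppersmith, in the streamlined formulation due to Howgrave-Graham. The guiding principle is that a polynomial congruence modulo $N$ can be converted into a genuine polynomial equation over $\mathbb{Z}$ provided the target root is small and the polynomial has small coefficients; LLL lattice reduction is the tool that manufactures such a small-coefficient polynomial from the congruence data.

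First I would fix an integer $m \ge 1$ (to be optimized at the end) and a bound $X$ with $X \ge |x_0|$, and introduce the family of polynomials
$$g_{i,j}(x) = N^{m-j}\, x^{i}\, f_{N}(x)^{j}, \qquad 0 \le j \le m,\ 0 \le i < \delta,$$
together with the higher shifts $x^{i} f_{N}(x)^{m}$ for $0 \le i < t$ and an auxiliary parameter $t$. Each of these satisfies $g_{i,j}(x_0) \equiv 0 \pmod{N^{m}}$ because $f_{N}(x_0) \equiv 0 \pmod{N}$. Next I would form the lattice $L$ spanned by the coefficient vectors of the polynomials $g_{i,j}(xX)$ expressed in the monomial basis $1, x, x^{2}, \dots$; since $f_{N}$ is monic this basis is triangular, so $\det L$ is just the product of the diagonal entries and comes out as an explicit power of $N$ times a power of $X$.

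Then I would invoke LLL on $L$ to obtain a short vector $v$ with the standard guarantee $\|v\| \le 2^{(\dim L - 1)/4}\,(\det L)^{1/\dim L}$, and let $g(x)$ be the polynomial recovered from $v$ (dividing the coefficient of $x^{i}$ by $X^{i}$); this $g$ still vanishes at $x_0$ modulo $N^{m}$. Applying Howgrave-Graham's lemma — that $g(x_0)\equiv 0 \pmod{N^{m}}$ together with $\|g(xX)\| < N^{m}/\sqrt{\dim L}$ forces $g(x_0)=0$ in $\mathbb{Z}$ — I would check that the LLL bound combined with the computed $\det L$ implies the Howgrave-Graham inequality exactly when $X$ is below roughly $N^{1/\delta}$; solving this inequality for $X$ under the optimal choice of $m$ and $t$ is what pins down the exponent $1/\delta$. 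Finally, knowing $g$ has $x_0$ as an integer root, I would recover $x_0$ by factoring $g$ over $\mathbb{Z}[x]$ (again LLL-based) and iterating over its $O(\delta)$ roots.

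The main obstacle is the trade-off in this last stage: one must let $\dim L = \delta m + t = O(\delta m)$ grow enough that the argument yields the clean bound, yet keep the exponential LLL factor $2^{\dim L/4}$ dominated by the slack in the $N^{m}$ term so that the whole procedure stays polynomial in $\log N$ and $\delta$. In the strict reading of the statement this requires absorbing the usual $\epsilon$-loss (so that $m$ must be taken of order $1/\epsilon$, or one settles for $|x_0| < N^{1/\delta-\epsilon}$ with a fixed small $m$); managing that balance is the delicate point, while everything else reduces to triangular linear algebra and two calls to LLL.
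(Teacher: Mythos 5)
The paper offers no proof of this statement at all: it is quoted verbatim as a known result (Coppersmith's theorem on small roots of univariate modular polynomials, in the Howgrave-Graham/May formulation) and is used only as a black box to argue that the parameters $V$ and $d$ of the $AA_{\beta}$ scheme lie outside the reachable root bounds. Your sketch is therefore not comparable to anything in the paper, but it is the correct standard argument: the polynomial family $N^{m-j}x^{i}f_{N}(x)^{j}$ with the extra shifts $x^{i}f_{N}(x)^{m}$, the triangular lattice of coefficient vectors of $g_{i,j}(xX)$ (triangularity from monicity, determinant read off the diagonal), LLL to extract a vector satisfying the Howgrave-Graham condition $\|g(xX)\|<N^{m}/\sqrt{\dim L}$, and ordinary root-finding over $\mathbb{Z}$ at the end. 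You also correctly flag the one delicate point, the $\epsilon$-loss; to get the clean exponent $1/\delta$ as stated one takes $\epsilon=1/\log_{2}N$, so that $N^{\epsilon}=2$ and the lattice dimension is $O(\delta\log N)$ (still polynomial in $\log N$ and $\delta$), and then covers the interval $\left(-N^{1/\delta},N^{1/\delta}\right)$ by a constant number of subintervals of length $N^{1/\delta-\epsilon}$, running the algorithm on each shifted polynomial. With that standard patch made explicit your outline is a complete and correct proof strategy; it simply supplies a proof the paper chose to import by citation.
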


\begin{theorem}
%\noindent \textbf{Theorem 2}\\
Let $N$ be an integer of unknown factorization, which has a
divisor $b>N^{\beta}$. Furthermore let $f_{b}(x)$ be an
univariate, monic polynimial of degree $\delta$. Then we can find
all solutions $x_{0}$ for the equation $f_{b}(x)\equiv
0(\textrm{mod }b)$ with
$$
|x_{0}|\leq \frac{1}{2}N^{\frac{\beta^{2}}{\delta}-\epsilon}
$$
in polynomial time in $(log N,\delta, \frac{1}{\epsilon})$.
\end{theorem}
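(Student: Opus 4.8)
The plan is to follow Coppersmith's lattice method in the Howgrave--Graham reformulation, exploiting that although we only know the multiple $N$ of the true modulus $b$, we are told $b \ge N^{\beta}$. Fix an integer parameter $m$ (to be optimized at the end) and an auxiliary integer $t = t(m)$. For $0 \le i < m$ and $0 \le j < \delta$ form the polynomials
$$g_{i,j}(x) = x^{j}\, N^{m-i}\, f_{b}(x)^{i},$$
and for $0 \le i < t$ the polynomials $h_{i}(x) = x^{i}\, f_{b}(x)^{m}$. Since $b \mid N$ and $f_{b}(x_{0}) \equiv 0 \pmod{b}$, the factor $N^{m-i}$ contributes $b^{m-i}$ and $f_{b}(x_{0})^{i}$ contributes $b^{i}$, so every $g_{i,j}(x_{0})$ and every $h_{i}(x_{0})$ is divisible by $b^{m}$; the same then holds for any integer linear combination of these polynomials.

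Next, let $X = \tfrac12 N^{\beta^{2}/\delta - \epsilon}$ be the target bound for $\abs{x_{0}}$ and build the lattice $L$ spanned by the coefficient vectors of $g_{i,j}(xX)$ and $h_{i}(xX)$. Ordering the basis by degree makes the basis matrix triangular, so $\det L$ is the product of the diagonal entries $X^{i\delta+j}N^{m-i}$ and $X^{m\delta+i}$, giving the closed form $\det L = X^{D(D-1)/2}\, N^{\delta m(m+1)/2}$ with dimension $D = \delta m + t$. Running LLL on $L$ yields, in time polynomial in $D$ and $\log N$, a nonzero vector whose associated polynomial $v(x)$ satisfies $\|v(xX)\| \le 2^{(D-1)/4}(\det L)^{1/D}$, and by construction $v(x_{0}) \equiv 0 \pmod{b^{m}}$ with $\abs{x_{0}} \le X$.

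The crux is then Howgrave--Graham's criterion: if $\|v(xX)\| < b^{m}/\sqrt{D}$, then $v(x_{0}) = 0$ holds over $\mathbb{Z}$, not merely modulo $b^{m}$. Using $b^{m} \ge N^{\beta m}$, it suffices to force $2^{(D-1)/4}(\det L)^{1/D} < N^{\beta m}/\sqrt{D}$; substituting the determinant and taking logarithms reduces this to an explicit inequality among $\log X$, $\log N$, $m$, $t$, $\delta$. Optimizing the free parameters is the delicate part: one takes $t$ of order $\delta m(\tfrac1\beta - 1)$ so that the $X$-contribution and the $N$-contribution are balanced, which makes the exponent on $N$ tend to $\beta^{2}/\delta$; a non-optimal choice of $t$ gives a strictly smaller exponent. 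Then one lets $m \asymp \beta^{2}/(\delta\epsilon)$, large enough to absorb the finite-$m$ loss into $\epsilon$ yet keeping $D = O(\delta m)$ polynomial in $\log N$ and $1/\epsilon$; the constant $\tfrac12$ soaks up the $2^{(D-1)/4}$ and $\sqrt{D}$ overheads. Finally $v$ is a univariate integer polynomial of degree $< D$, so all its integer roots — in particular $x_{0}$ — are recovered in polynomial time by standard root isolation (or by factoring $v$ over $\mathbb{Z}$).

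The main obstacle is this determinant-and-parameter bookkeeping: verifying the triangular determinant exponents, checking that the balancing choice of $t$ is genuinely optimal so that $\beta^{2}/\delta$ (and not a weaker exponent) is attained, and confirming that $m$ can be pushed large enough to kill the $\epsilon$-loss while $D$ stays polynomial in the input size. The remaining ingredients — the divisibility of the $g_{i,j}$ and $h_{i}$ at $x_{0}$, the LLL length bound, and the Howgrave--Graham lemma — are routine and standard.
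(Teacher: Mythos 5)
The paper does not prove this theorem at all: it is quoted verbatim as a known Coppersmith-type result (in the form due to Howgrave--Graham and May) and used only as a black box in the ``Attacking $d$'' analysis, so there is no in-paper argument to compare against. Your sketch is the standard and correct proof of that known result --- the shift polynomials $x^{j}N^{m-i}f_{b}^{i}$ and $x^{i}f_{b}^{m}$, the triangular lattice with $\det L = X^{D(D-1)/2}N^{\delta m(m+1)/2}$, the Howgrave--Graham criterion against $b^{m}\ge N^{\beta m}$, and the balancing choices $t\approx\delta m(\tfrac{1}{\beta}-1)$ and $m\approx\beta^{2}/(\delta\epsilon)$ are exactly the ingredients of the published proof --- so it supplies precisely what the paper omits.
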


$\bullet$ \underline{Attacking V}\\

\noindent With reference to Theorem 1. Let $N=e_{A1}=p^{2}q$ and
$d'\equiv e^{-1}(\textrm{mod }N)$. Compute $W\equiv Cd'\equiv
V^{2}(\textrm{mod }N)$. Let $f_{N}(x)\equiv
x^{2}-W\equiv0(\textrm{mod }N)$. Hence, $\delta=2$. Thus the root
$x_{0}=V$ can be recovered if $V<N^{\frac{1}{2}}\approx2^{1.5n}$.
But since $V\approx2^{2n}$, this attack is infeasible.\\
%\newpage

$\bullet$ \underline{Attacking $d$}\\

\noindent With reference to Theorem 2. We begin by observing
$f_{b}(x)=ex-1\equiv0(\textrm{mod }pq)$ where $pq$ in an unknown
factor of $N=e_{A1}=p^{2}q$. Since $pq>N^{\frac{2}{3}}$ we have
$\beta=\frac{2}{3}$. From $f_{b}(x)$ we also have $\delta=1$. By
the Coppersmith theorem, the root $x_{0}=d$ can be found if
$|x_{0}|<N^{\frac{4}{9}}$.But since $d>N^{\frac{4}{9}}$, this
attack is infeasible.

\subsection{Euclidean division attack} From $C=Ue_{A1}+V^{2}e_{A2}$,
the size of each public parameter within $C$ ensures that
Euclidean division attacks does not occur. This can be easily
deduced as
follows:\\
\begin{enumerate}
\item $\lfloor\frac{C}{e_{A1}}\rfloor \neq U$
\item $\lfloor\frac{C}{e_{A2}}\rfloor \neq V^{2}$\\
\end{enumerate}

\section{Analysis on lattice based attack}

The square lattice attack has been an efficient and effective
means of attack upon schemes that are designed based on
Diophantine equations. The $AA_{\beta}$ scheme has gone through
analysis regarding lattice attacks while it went through the
design process. Let $C=Ue_{A1}+V^{2}e_{A2}$ be an $AA_{\beta}$
ciphertext. Consider the diophantine equation
$e_{A1}x_{1}+e_{A2}x_{2}=C$. Introduce the unknown $x_{3}$ and
consider the diophantine equation
$$e_{A1}x_{1}+e_{A2}x_{2}-Cx_{3}=0.$$  Then $(U,V^{2},1)$ is a
solution of the equation. Next let $T$ be a number to fixed later.
Consider the lattice $\mathcal{L}$ spanned by the matrix:
\[
M_{0} = \left( {\begin{array}{*{20}c}
   1 & 0  & {e_{A1}T}  \\
   0 & 1  & {e_{A2}T}  \\
   0 & 0  & {-CT}  \\
\end{array}} \right)
\]
Observe that
$$(x_{1},x_{2},x_{3})M_{0}=(x_{1},x_{2},T(e_{A1}x_{1}+e_{A2}x_{2}-Cx_{3})).$$
This shows that the lattice $\mathcal{L}$ contains the vectors
$(x_{1},x_{2},T(e_{A1}x_{1}+e_{A2}x_{2}-Cx_{3}))$ and more
precisely the vector-solution $V_{0}=(U,V^{2},0).$ Observe that
the length of $V_{0}$ satisfies
$$\|V_{0}\|=\sqrt{U^{2}+V^{4}}\approx 2^{4n}.$$ On the other hand,
the determinant of the lattice is det$(\mathcal{L})=CT$ and the
Gaussian heuristics for the lattice $\mathcal{L}$ asserts that the
length of its shortest non-zero vector is usually approximately
$\sigma(\mathcal{L})$ where
$$\sigma(\mathcal{L})=\sqrt{\frac{dim(\mathcal{L})}{2\pi
e}}\textrm{det}(\mathcal{L})^{\frac{1}{\textrm{dim}(\mathcal{L})}}=\sqrt{\frac{3}{2\pi
e}}(CT)^{\frac{1}{3}}.$$ If we choose $T$ such that
$\sigma(\mathcal{L})>\|V_{0}\|$, then $V_{0}$ can be among the
short non-zero vectors of the lattice $\mathcal{L}$. To this end,
$T$ should satisfy
\begin{equation}
T>(\frac{\pi e}{2})^{\frac{3}{2}} \cdot \frac{2^{12n}}{C}
\end{equation}
Next, if we apply the LLL algorithm to the lattice $\mathcal{L}$,
we will find a basis $(b_{1},b_{2},b_{3})$ such that
$\|b_{1}\|\leq \|b_{2}\| \leq \|b_{3}\|$ and $$b_{i}\leq
2^{\frac{n(n-1)}{4(n+1-i)}}\textrm{det}(\mathcal{L})^{\frac{1}{n+1-i}},
\textrm{for } i=1,...,4 \textrm{ and } n=3.$$ For $i=1$, we choose
$T$ such that
$\|V_{0}\|\leq\|b_{1}\|\leq2^{\frac{1}{2}}(CT)^{\frac{1}{3}}$.
Using the approximation $\|V_{0}\|\approx2^{4n}$, this is
satisfied if
$$V>2^{-\frac{1}{2}}\cdot \frac{2^{12n}}{C},$$ which follows from
the lower bound of equation $(3)$. We experimented this result to
try to find $(U,V^{2},0)$. The LLL algorithm outputs a basis with
a matrix in the form
\[
M_{1} = \left( {\begin{array}{*{20}c}
   a_{11} & a_{12} & 0 \\
   a_{21} & a_{22} & 0 \\
   a_{31} & a_{32} & T \\
   \end{array}} \right)
\]
If $(U,V^{2},0)$ is a short vector, then
$(U,V^{2},0)=(x_{1},x_{2},x_{3})M_{1}$ for some short vector
$(x_{1},x_{2},x_{3})$. We then deduce the system
$$
\left\{%
\begin{array}{ll}
    a_{11}x_{1}+a_{21}x_{2}=U\\
    a_{12}x_{1}+a_{22}x_{2}=V^{2}\\
\end{array}%
\right.
$$
from which we can deduce that $x_{3}=0$. If we compute
$(Ue_{A1}-V^{2}e_{A2})/C$, we get $x_{2}=1$ for some $x_1$. It
follows that
$$
\left\{%
\begin{array}{ll}
    a_{11}x_{1}+a_{21}=U\\
    a_{12}x_{1}+a_{22}=V^{2}\\
\end{array}%
\right.\\
$$
This situation is similar to the congruence attack. We can also
observe that this is a system of two equations with three unknowns
(i.e. $x_{1},U,V$).\\

\subsection{Example with lattice based attack}

We will use the parameters in the earlier example. Observe the
lattice $\mathcal{L}$
spanned by the matrix:\\
\[
M_{0} = \left( {\begin{array}{*{20}c}
   1 & 0 & {e_{A1}T}  \\
   0 & 1 & {e_{A2}T}  \\
   0 & 0 & -CT \\
\end{array}} \right)
\]
\\
\noindent the length of the vector $V=(U,V^{2},0)$ is
approximately $\parallel V
\parallel \approx 35751905917344588937$. We will
use $T=2^{20n}$ which would result in the length of the vector $V$
is shorter than the gaussian heuristic of the lattice $\mathcal{L}$. \\

The LLL algorithm outputs the matrix $M_{1}$ given by:\\
\[
\left( {\begin{array}{*{20}c}
  -4106878163802480 & 245505609868187 & 0  \\
 247367271832221073 & 4155888875658045598 & 0  \\
-1118395942494397 & 66856738131713 & T  \\
\end{array}} \right)
\]
\\

\section{Underlying security principles}

\subsection{The integer factorization problem}
%$\bullet$ \underline{The integer factorization problem}\\

To find the unknown composite $p$ and $q$ such that \noindent
$e_{A1}=p^{2}q$.

\subsection{The square root modulo problem}

%$\bullet$ \underline{The square root modulo problem}\\

Since gcd$(e_{A1},e_{A2})=1$, one can obtain the relation
$V^{2}\equiv \alpha (\textrm{mod } e_{A1})$. Since
$e_{A1}=p^{2}q$, then this is equivalent to calculating square
roots modulo composite integers with unknown factorization which
is infeasible.

\subsection{The modular reduction problem}

%$\bullet$ \underline{The modular reduction problem}\\

Since gcd$(e_{A1},e_{A2})=1$, one can obtain $U\equiv \beta
(\textrm{mod } e_{A2})$. Since $U\gg e_{A2}$, to compute $U$ prior
to modular reduction by $e_{A2}$ is infeasible.

\subsection{Equivalence with integer factorization}
From $C=Ue_{A1}+V^{2}e_{A2}$ we have $$C\equiv V^{2}(\textrm{mod
}e_{A1})$$ where $e_{A1}=p^{2}q$ is of unknown factorization. We
show here that solving this congruence relation is equivalent to
factoring $e_{A1}$. If we know the factorization of $e_{A1}$, then
it is easy to solve the congruence relation. Conversely, suppose
that we know all the solutions. By Lemma 2, the four solutions are
$$V_{1}\equiv
x_{p}M_{1}q+x_{q}M_{2}p\textrm{ }(\textrm{mod }pq),$$
$$V_{2}\equiv x_{p}M_{1}q-x_{q}M_{2}p\textrm{ }(\textrm{mod
}pq),$$ $$V_{3}\equiv -x_{p}M_{1}q+x_{q}M_{2}p\textrm{
}(\textrm{mod }pq),$$ $$V_{4}\equiv
-x_{p}M_{1}q-x_{q}M_{2}p\textrm{ }(\textrm{mod }pq).\\$$ \newline
and are such that $V_{i}<pq$ for $i=1,2,3,4$. We will now have
$V_{1}+V_{3}=2x_{q}M_{2}p+\alpha pq$ for some integer $\alpha$.
Then $V_{1}+V_{3}\equiv 0(\textrm{mod }p)$. On the other hand,
$V_{1}+V_{3}<2pq<p^{2}q$. Hence $V_{1}+V_{3}\not\equiv
0(\textrm{mod }p^{2}q)$. Therefore $$p=
gcd(e_{A1},V_{1}+V_{3})=gcd(p^{2}q,V_{1}+V_{3}).$$ Hence
$q=\frac{p^{2}q}{p^2}$.

\section{Table of Comparison}

The following is a table of comparison between RSA, ECC, NTRU and
$AA_{\beta}$. Let $|E|$ denote public key size. The $AA_{\beta}$
cryptosystem has the ability to encrypt large data sets (i.e.
$4n$-bits of data per transmission). The ratio of $M:|E|$ suggests
better economical value per public key bit being used. \\

\begin{center}
\begin{tabular}{|c|c|c|c|c|}
  \hline
  % after \\: \hline or \cline{col1-col2} \cline{col3-col4} ...
  Algorithm & Encryption & Decryption & Ratio  & Ratio \\
   & Speed & Speed & $M:C$ & $M:|E|$  \\
  \hline
  RSA & $O(nlogn^2)$ & $O(nlogn^2)$ & $1:1$ & $1:2$ \\
  \hline
  ECC & $O(n^3)$ & $O(n^3)$ & $1:2$ & $1:2$ \\
  \hline
  NTRU & $O(n^2)$ & $O(n^2)$ & Varies \cite{hoffstein2} & N/A \\
  \hline
  $AA_{\beta}$ & $O(n^2)$ & $O(n^2)$ & $1:1.75$ & $1:1.5$  \\
  \hline
\end{tabular}
\end{center}

\begin{center}
{Table 1. Comparison table for input block of length $n$}
\end{center}

For the decryption process off the $AA_{\beta}$ scheme, the value of
$k$ stated in its complexity is a ``small" constant. Empirical
evidence for length $n=512$ (i.e. the length of the prime is 512
bits) which would result in total public key length to be $6n=3072$
bits, when decrypting

\section{Conclusion}

The asymmetric scheme presented in this paper provides a secure
avenue for implementors who need to transmit up to $4n$-bits of
data per transmission. With an expansion rate of $1:1.75$ the
ciphertext to be transmitted is not much more larger than the
ratio of the ECC. Eventhough its expansion rate is larger than
RSA, this is only natural since it is transmitting a larger data
set. This will give a significant contribution in a niche area for
implementation of
asymmetric type security in transmitting large data sets.\\
\indent The scheme is also comparable to the Rabin cryptosystem
with the advantage of having a unique decryption result. It has
achieved an encryption and decryption speed with complexity order
of $O(n^2)$ and it also has a simple mathematical structure for
easy implementation.

\section*{Acknowledgments}The authors would like to thank Prof. Dr.
Abderrahmane Nitaj of D\'{e}partement de Math\'{e}matiques,
Universit\'{e} de Caen, France, Dr. Yanbin Pan of Key Laboratory
of Mathematics Mechanization Academy of Mathematics and Systems
Science, Chinese Academy of Sciences Beijing, China and Dr. Gu
Chunsheng of School of Computer Engineering, Jiangsu Teachers
University of Technology, Jiangsu Province, China for valuable
comments and discussion on all prior $AA_{\beta}$ designs.

% do the biliography:

\bibliographystyle{amsplain}

\begin{thebibliography}{10}

%\bibliography{IMTC_04}

% where ``MyBibliography.bib'' is the name of the file with all the
% BibTeX entries.
% IMPORTANT: The format file IEEEbib.bst to create the bibliography using the bibtex command is not on the IEEE website any longer (i.e., it cannot be downloaded). Without this formatting file bibtex command cannot be used and consequently the database bibliography file (e.g., IMTC04.bib) cannot be read and inserted in the document. In the following, a direct bibliography solution is used instead.

\vspace{0.5cm} {

%\bibitem{ariffin} M.~R.~K. Ariffin, ``A Proposed IND-CCA2
%Scheme for Implementation on an Asymmetric Cryptosystem Based on
%Diophantine Equation Hard Problem," Proc. {\em Third International
%Conference on Cryptology and Computer Security 2012}, pp. 193--197,
%June. 2012.

\bibitem{diffie} W. Diffie and M.~E. Hellman, ``New Directions in Cryptography," Proc. {\em IEEE Transactions on Information Theory}, pp. 644--654, 1976.

\bibitem{galbraith} S.D.Galbraith, Mathematics of Public Key Cryptography,
Cambridge University Press, 2012.

\bibitem{hoffstein1}  J. Hoffstein, J. Pipher and J.~H. Silverman,  {\em An Introduction to Mathematical Cryptography.} New York:
Springer, 2008.

\bibitem{hoffstein2} J. Hoffstein, D.Lieman, J. Pipher and J.~H.
Silverman, {\em NTRU : A Public Key Cryptosystem, NTRU
Cryptosystems Inc.}[Online]. Available:
\textsc{http://grouper.ieee.org/groups/1363/lattPK/submissions/ntru.pdf}
2008.

\bibitem{hermans} J. Hermans \textit{et. al.}, ``Speed Records for NTRU," {\em CT-RSA 2010, LNCS 5985}, pp. 73--88,
2010.

\bibitem{hoffstein3} J. Hoffstein, J. Pipher, J.~H. Silverman. ``NTRU: A Ring Based Public Key Cryptosystem in Algorithmic Number Theory,'' {\em Lecture Notes in Computer Science 1423},
pp. 267--288, 1998.

\bibitem{koblitz}  N. Koblitz, ``Elliptic Curve Cryptosystems," {\em Math. Comp}, pp.  203--209, 1987.

\bibitem{menezes} A.J. Menezes, P.C. van Oorschot, and S.A. Vanstone, Handbook of
applied cryptography, CRC Press, 1996.

\bibitem{rabin} M.O. Rabin, Digitalized signatures and public-key functions as
intractable as factorization, Tech. Report MIT/LCS/TR-212, MIT
Laboratory for Computer Science, 1979.

\bibitem{rsa} R.~L. Rivest, A. Shamir and L. Adleman, ``A method for obtainning digital signatures and public key cryptosystems,'' {\em Commun. ACM},
pp. 120--126, 1978.

\bibitem{schneier} B. Schneier, {\em Key length in Applied Cryptography.} New York:
Wiley \& Sons, 1996.

\bibitem{scott} M. Scott, {\em When RSA is better than ECC.}[Online]. Available:
\textsc{http://www.derkeiler.com/Newsgroups/sci.crypt/2008-11/msg00276.html}
2008.

\bibitem{wagstaff}  S.~S. Wagstaff, {\em Cryptanalysis of Number Theoretic Ciphers}, Chapman \& Hall, 2003.

\bibitem{williams} H. C. Williams, A modification of the RSA public key encryption
procedure, IEEE Trans. Inf. Theory 26 (1980), no. 6, 726 ~29.

\bibitem{vanstone} S. Vanstone, {\em ECC holds key to next generation cryptography.}[Online]. Available:
\textsc{http://www.design-reuse.com/articles/7409/ecc-hold-key-to-next-gen-cryptography.html}
2006.

}
\end{thebibliography}

\end{document}